\newcommand{\stam}[1]{}
\newcommand{\event}[2]{{G^{#2}_{#1}}}
\newcommand{\topG}[1]{T_{#1}}
\newcommand{\menrank}{R_M}
\newcommand{\womenrank}{R_W}
\newcommand{\slowfunc}{{\tau}}
\renewcommand{\O}{\mathcal{O}}
\begin{document}
	\title{The Influence of One Strategic Agent on the Core of Stable Matchings}
	\author{%
		Ron Kupfer\thanks{This project has received funding from the European Research Council (ERC) under the European Union's Horizon 2020 research and innovation program (grant agreement No. 740282).
		} \\
	}
\authorrunning{Ron Kupfer}
\institute{The Hebrew University of Jerusalem, Israel\\
\email{ron.kupfer@mail.huji.ac.il}}
	\date{}	
	\maketitle	
	\begin{abstract} 
		In this work, we analyze the influence of a single strategic agent on the quality of the other agents' matchings in a matching market. We consider a stable matching problem with $n$ men and $n$ women when preferences are drawn uniformly from the possible $(n!)^{2n}$ full ranking options. We focus on the effect of a single woman who reports a modified preferences list in a way that is optimal from her perspective.
		We show that in this case, the quality of the matching dramatically improves from the other women's perspective. When running the Gale--Shapley men-proposing algorithm, the expected women-rank is $O(\log^4 n)$ and almost surely the average women-rank is $O(\log^{2+\epsilon}n)$, rather than a rank of $O(\frac{n}{\log n})$ in both cases under a truthful regime. On the other hand, almost surely, the average men's rank is no better than $\Omega\left(\frac{n}{\log^{2+\epsilon}n}\right)$, compared to a rank of $O({\log n})$ under a truthful regime.\\
		All of the results hold for any matching algorithm that guarantees a stable matching, which suggests that the core convergence observed in real markets may be caused by the strategic behavior of the participants.
	\end{abstract}
		
	\section{Introduction}
The stable matching problem concerns a scenario where we must find a matching between two disjoint sets of agents that satisfies natural stability constraints. This problem has received an enormous amount of attention, starting with the seminal work of \citet{gale1962college}, and has been used as a paradigm in a host of many applications.

The basic formalism considers a matching between a set of $n$ men and $m$ women. Each man has a preference order over all women and the option of staying unmatched,
and the same goes for the women. A matching between the set of men and the set of women is called \textit{stable} if there exists no ``blocking pair'', i.e. a man and a woman who prefer each other to their current matching.  

The Men-Proposing Deferred-Acceptance algorithm (Gale--Shapley algorithm, DA) is an algorithm for finding a stable matching and its proof of correctness shows that such a matching always exists.
The algorithm works iteratively: at each round, every unmatched man proposes to his most preferred woman who has not previously rejected him. Each woman chooses her most preferred man out of those who proposed to her and releases all the other proposers to continue on their lists. The algorithm terminates when all men are matched or reached to the end of their list.

It is known that the algorithm is optimal for men in the following sense: for each man $m$, there is no other stable matching in which $m$ is matched with someone whom he prefers more. On the other hand, as shown by \citet{mcvitie1971stable}, the algorithm yields the worst stable matching for any woman. That is, for each woman $w$, there is no other stable matching in which $w$ is matched with someone whom she prefers less.

Starting with the seminal contribution of \cite{gale1962college}, and for several decades that followed, economists believed that one of the most crucial market design decisions to be made by policymakers in the process of implementing the Deferred Acceptance algorithm is to choose the proposing side. Indeed, on top of the men-optimality of the men-proposing DA, it was readily shown by \citet{knuth1976stable} that in a likely market, the difference in expected rank of partners between the two sides is anything but marginal: while men get on average their $\log(n)$-th most preferred woman, women only get their $n/\log(n)$-th most preferred man.\footnote{Unless stated otherwise, we use $\log(n)$ to denote the natural logarithm.}

This idea, that proposing gives a serious advantage, remained prominent among researchers until very recently. The first hint that something else is going on came from \citet{roth1999redesign} work on the NRMP data. They showed that in fact the difference between the doctor-proposing DA and the program-proposing DA is not that significant, and only a handful number of doctors and programs get different allocations under these two mechanisms. It took nearly twenty years more, and several important theoretical insights, for market designers to fully understand that the classic matching theory result depended on so many factors and that in most real markets it is very unlikely to see any difference between the two extremal stable allocations. Following simulations presented by \cite{roth1999redesign}, \citet{immorlica2005marriage} proved that if agents on one of the sides consistently rank only a small subset of the agents on the other side, then this side gets a significant advantage in any stable mechanism. This intuition was later extended by \citet{kojima2009incentives} to the case of many-to-one matching as well. More recently, \citet{ashlagi2017unbalanced} showed that even the slightest imbalance in the number of men and women in the market causes the short side of the market to gain the advantage. Generally speaking, the advantage lies with the side that has some kind of ``market power.''\footnote{In real data, the market power often lies in the hands of few agents, and not the entire side, so all results should be interpreted as pointing to theoretical reasoning and logical paths through which certain agents get a better allocation.}

This paper continues this line of work by presenting a new dimension which is likely to affect real markets and may also shift the balance of power. In order to clarify this new intuition, we present a model that intentionally avoids all the previous arguments for core-convergence (i.e., similarity within the set of stable matchings). Specifically, our agents hold complete rank order lists, and the market is balanced. The result we prove is strikingly powerful: We show that even the slightest introduction of strategic behavior already gives the advantage to the proposed-to side. This immediately suggests that the core convergence we observe in field data may also be caused by the presence of even a small number of strategic agents.

Under the Gale--Shapley mechanism it is a dominant strategy for men to report their preferences truthfully \cite{dubins1981machiavelli}. This is not the case for women.
The following simple example with two men ($m_1$,$m_2$) and two women ($w_1$,$w_2$) demonstrate this. 
Suppose that their preference rankings are as follows:
$$\vcenter{\halign{#\hfil\quad&$#$\hfil\qquad\qquad
		&#\hfil\quad&$#$\hfil\cr
		$m_1$ prefers&w_1>w_2&$w_1$ prefers &m_2>m_1\cr
		$m_2$ prefers&w_2>w_1&$w_2$ prefers &m_1>m_2\ ,\cr
}}$$
and for all of the agents the least preferred option is to stay single. The algorithm matches $w_1$ to $m_1$. If $w_1$ falsely reports that she prefers staying single to being matched with $m_1$, the algorithm matches her to $m_2$, whom she prefers to $m_1$.

Strategic behavior under stable matching algorithms has been a topic of vast research.
In \cite{roth1982economics} it is shown that there is no algorithm for which reporting the true preferences is a dominant strategy for both men and women.
A partial list of works on strategic behavior by women under the Gale--Shapley algorithm includes \cite{dubins1981machiavelli,gale1985ms,immorlica2005marriage,coles2014optimal,teo2001gale,roth1999truncation,ma2010singleton}.

Notice that strategic behavior by a woman affects the outcome of the other women as well, e.g., in the example above, perhaps surprisingly, the strategic behavior of $w_1$ improves the matching of $w_2$. It is known \cite{gonczarowski2013sisterhood,ashlagi2012manipulability} that in general, under the Gale--Shapley algorithm, strategic behavior by any set of women can only benefit the other women in the sense that if none of the former are worse off, then neither are any of the latter.
A natural and well-known way to manipulate the algorithm is to truncate the list of the reported preferences, i.e., to set a threshold so that only mates from some given rank and above are acceptable (e.g., \cite{gonczarowski2014manipulation}).
This strategy is optimal for any woman when all the other agents' preferences are known to her, assuming that all the other agents report truthfully \cite{roth1999truncation}. In the same work, it is shown that this is also true for a wide range of partial information structures.

In this work, we analyze the effect of a selfish reporting strategy by one or more women, on the quality of the matching obtained by the other agents. 
Specifically, we explore the characteristics of such strategies in the commonly studied setting of a balanced market with $n$ men and $n$ women (e.g., \cite{irving1986complexity,irving1987efficient,vate1989linear}) and the set of preferences is drawn uniformly at random from the set of all possible $(n!)^{2n}$ full ranking options (e.g., \cite{pittel1989average,knuth1990stable,coles2014optimal,pittel2007number}). 
A simple measure of the quality of a match from the perspective of a given agent is the rank of her or his match (e.g., \cite{irving1987efficient}). We say that a person has a rank $k$ in a matching if they are matched with their $k$th favorite mate (where rank $1$ denotes being matched to the most preferred mate and rank $n$ denotes being matched to the least preferred one). 
As stated before, in our model, under a truthful regime, the expected rank of any woman $w$ is of order $n/{\log(n)}$ and The expected rank of the men is of order $\log(n)$. 

Our main theorem shows that one strategic agent is expected to dramatically affect the entire outcome of the matching. In particular, the average women's rank is polylogarithmic in $n$ compared to an expected average rank of order of $n/{\log(n)}$ under a truthful reporting regime. These results hold in expectation and with high probability.
Formally:
\begin{theorem} \label{thm: w.h.p wlog2}
	In a random uniformly ranked balanced matching market with $n$ men and $n$ women, where a single woman uses her optimal strategy and all the other agents report truthfully, we have that:
	\begin{enumerate}
		\item Almost surely, a $1-o(1)$ fraction of the women are matched with a man from their top $\O\left(\log^{2+\epsilon}(n)\right)$ men.
		Furthermore, the expected average women's rank is of order of $\O\left(\log^4(n)\right)$.
		\item Almost surely, the average men's rank is no better than $\Omega\left(\frac{n}{\log^{2+\epsilon}(n)}\right)$. 
	\end{enumerate}
\end{theorem}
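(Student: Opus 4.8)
The plan is to funnel both parts of the theorem through a single quantity: the total number of proposals $P$ made during the Men-Proposing Deferred-Acceptance run on the market in which the strategic woman $w^*$ has truncated her list. By the characterization cited from \cite{roth1999truncation} the optimal strategy of $w^*$ is a truncation, and her optimal truncation matches her to her woman-optimal stable partner, whose rank is $\Theta(\log n)$ with high probability (the mirror image of the men's $\Theta(\log n)$ rank in the men-optimal matching under \cite{pittel1992likely}). Hence $w^*$ accepts only her top $\Theta(\log n)$ men and is effectively unavailable to the other $n-\Theta(\log n)$ of them. Conceptually this is what tips a perfectly balanced market into the regime of the stark effect of competition: $w^*$ behaves almost like a woman who has left the market, so the remaining women become the effectively short side and are chased, while the men are driven deep down their lists.

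First I would record the two exact identities that make the two parts dual to one another. A man's rank equals the number of distinct proposals he makes (he proposes in decreasing preference order and is held by the last woman he proposes to), so the average men's rank is exactly $P/n$. On the women's side I would invoke the principle of deferred decisions: reveal the target of each proposal as a uniformly random not-yet-tried woman and her ranking of the proposer as uniform, so that a woman receiving $d$ proposals holds a man whose rank is, in essence, the minimum of $d$ uniform samples from $\{1,\dots,n\}$, i.e. of order $n/d$. Thus a woman attains rank $O(\log^{2+\epsilon} n)$ precisely when she receives $\Omega(n/\log^{2+\epsilon} n)$ proposals, and proving the theorem reduces to showing that $P=\Omega\!\left(n^2/\log^{2+\epsilon} n\right)$ with these proposals spread over a $1-o(1)$ fraction of the women.

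The heart of the argument, and the step I expect to be the main obstacle, is the lower bound $P=\Omega\!\left(n^2/\log^{2+\epsilon} n\right)$. Here I would track the run as a sequence of rejection chains following the single hole (currently unengaged woman) created when $w^*$ rejects her men-optimal partner. Because $w^*$ accepts an incoming proposal only with probability $\Theta(\log n/n)$, she acts as a near-absorbing rejecting vertex: each time the chain returns to her, which under deferred decisions resembles a random walk hitting a fixed vertex after order $n$ steps, she almost surely rejects and relaunches it, so that roughly $n/\log n$ passes are needed before an acceptable man finally arrives, each pass costing order $n$ proposals. The heuristic therefore already suggests $P\approx n^2/\log n$, and the stated bound leaves polylogarithmic slack for the rigorous treatment. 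Making this rigorous requires controlling the dependencies introduced by deferred decisions (men never re-propose, and the set of fresh women shrinks as the chain proceeds and the engaged women improve) and ruling out premature termination; a coupling with an idealized random-walk and coupon-collector process, together with concentration, is the technical core.

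Given the proposal count, part~2 is immediate: almost surely $P/n=\Omega\!\left(n/\log^{2+\epsilon} n\right)$, with concentration delivering the almost-sure average men's rank bound. For part~1 I would argue that these proposals are not concentrated on a few women (holding a better man only deflects later proposals elsewhere), so a $1-o(1)$ fraction receive $\Omega(n/\log^{2+\epsilon} n)$ proposals and hence rank $O(\log^{2+\epsilon} n)$ by the minimum-of-uniforms estimate and a union bound. The remaining and most delicate point is the expected average women's rank: the weaker $O(\log^4 n)$ bound, rather than $O(\log^{2+\epsilon} n)$, reflects the upper tail, since a small set of women may be starved of proposals and contribute a large rank to the mean. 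I would bound this via $\mathbb{E}[\text{rank}]=\sum_r \Pr[\text{rank}\ge r]$, estimating both the probability that an individual woman receives few proposals and the probability that the global process behaves atypically; controlling this tail is exactly what costs the extra logarithmic factors.
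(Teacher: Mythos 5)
Your overall reduction is sound and shares the paper's basic architecture: everything is funneled through proposal counts, a woman with $k$ distinct proposals is matched at rank roughly $n/k$ by the minimum-of-uniforms argument, and concentration spreads the proposals over a $1-o(1)$ fraction of the women. Your route to part 2 (the identity that the average men's rank equals the number of distinct proposals divided by $n$) is even a bit more elementary than the paper's, which instead cites the hyperbola law from Pittel --- any stable matching with average women's rank of order $k$ has average men's rank of order $n/k$ --- so that part 2 falls out of part 1 as a corollary.

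However, there is a genuine gap at exactly the step you yourself flag as ``the technical core'': the lower bound $P=\Omega\bigl(n^{2}/\log^{2+\epsilon}(n)\bigr)$ is never proven, only motivated by a random-walk/absorption heuristic (rejection chains returning to $w^{*}$ about $n/\log(n)$ times, each pass costing order $n$ proposals). Controlling the dependencies, the shrinking set of fresh women, and premature termination is the entire difficulty of the theorem, and you leave it open. The paper sidesteps this with a short rigorous argument that your proposal is missing: because the strategic woman's preference list is revealed independently of the proposal process, the position at which any fixed man first appears among the distinct proposals she receives is uniform on $\{1,\dots,n\}$; a union bound over her top $a$ men and Bayes' rule give $P(M=m \mid \mathcal{C}_a)\leq 2a/n$ for every $m$, where $M$ is the number of distinct proposals she receives before accepting and $\mathcal{C}_a$ is the event that she stops within her top $a$ men. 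This makes $M\geq n/(\tau(n)\log(n))$ w.h.p.\ with no random-walk analysis at all; Chernoff bounds then show the total proposal count exceeds $Mn/2$ and that every woman gets at least $M/4$ proposals. Two further points you gloss over: your minimum-of-uniforms step requires \emph{distinct} proposals, and converting raw proposals to distinct ones is where the paper pays a logarithmic factor (via a lemma that no man proposes to the same woman more than $20\log(n)$ times); and the expected-rank bound $O(\log^{4}(n))$ needs an explicit case decomposition (the paper conditions on the events $B_0$, $B_1$, $B_2$ and on the failure of $\mathcal{C}_a$ with $a=7\log^{2}(n)$), whereas your tail-sum sketch does not identify where the extra logarithmic factors actually come from.
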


These results do not hinge on the strategic woman having full information of the preferences or taking the exact optimal truncation but are in fact more robust, as we will show later.

Since the Gale--Shapley algorithm yields the worst stable matching for women, any upper bound for the women's rank holds for any matching algorithm that outputs a stable matching for the true preferences. 
Assume that the strategic woman considers only truncation strategies. The optimal truncation strategy is indeed an optimal strategy for her among all the possible strategies. Since an optimal truncation strategy yields a stable matching in the true preferences under any matching algorithm that guarantees to output a stable matching for the reported preferences, our results hold for a large family of algorithms.
Such algorithms are discussed in ,e.g., \cite{roth1990random,dworczak2016deferred,klaus2006procedurally,aldershof1999refined,irving1987efficient,gusfield1987three}.

We present simulation results in the setting described above, for different market sizes. These simulation results reinforce the theorem's conclusions and suggest that the actual effect might be even stronger than the formally proven effect.

One interpretation of the results is that although the Gale--Shapley algorithm is optimal for the men, in fact, strategic behavior by only one woman is sufficient to almost completely eliminate the men's advantage.
An important insight is that strategic behavior narrows the set of possible stable matchings attainable by any algorithm. That is, one strategic agent may strongly influence the matching in favor of that agent's sex, be it the men or the women. Our result leads to the conclusion that even a small amount of strategic behavior is sufficient in order to rule out almost all of the possible matching that are stable under the true players' preferences.

\section{Preliminaries}
\paragraph*{The Process.}
We analyze the effect of one strategic agent in a balanced market of $n$ men and $n$ women where the set of preferences are chosen uniformly from the possible $(n!)^{2n}$ full ranking options. 
In this section, we assume that one woman, $g$, gets to look at all the other agents' preference lists in advance. Then, $g$ acts according to a preference list chosen by her in a strategic way in order to maximizes her utility.

The outcome of the Gale--Shapley algorithm is independent of the order of proposals made by the men \cite{mcvitie1971stable}.
Using this fact, the algorithm can be described as the following process: a man needs to choose his $i$th preference only after the first $i-1$ women have already rejected him.
The fact that the preferences are chosen uniformly allows another simplification. When a man chooses his $i$th woman, he draws a preference with ``amnesia''; i.e., the realization is done over all $n$ possibilities, allowing him to choose a woman who has already rejected him and make her a redundant proposal. In addition, each woman also reveals her preferences in an online manner and she accepts her $k$th (distinct) proposal with probability $\frac{1}{k}$. This point of view was first presented in \cite{knuth1990stable}.\\ 
Under such a perspective, the running of the algorithm under $g$'s strategy can be viewed as a process in which women getting proposals and $g$ keeps rejecting all of the offers she gets. At some point, $g$ decides to accept an offer and the process terminates. This stopping point is selected optimally by $g$ given the preferences of all the agents. As we will show, the outcome is a stable matching in the original preferences.
Combining these observations, we follow \cite{knuth1990stable} and describing the process using the following algorithm.
In order to account for the property that $g$'s strategy is optimal, we add an oracle that tells $g$ when to stop the rejection process. That is, the oracle is exposed to the realization of all the players' preferences yet to come. This process is equivalent to the original Gale--Shapley algorithm with $g$ acting in a strategically optimal way. We add some notations to be used in the algorithm description:
\\For $i\in[n]$, we denote by $W_i$ the set of women proposed to by man $i$.
For $i\in[n]$, we denote by $x_i$ the man who have made the best offer so far to woman $i$ and by $k_i$ the number of proposals received by this woman.
$\ell$ is the number of men who have proposed at least once so far.
$p$ is the man who is currently proposing and $h$ is the woman who is currently being proposed to.
We now describe the process: 
\begin{enumerate}
	\item  Let $W_j = \emptyset$, $x_j = 0$ and $k_j = 0$ for $1 \leq j \leq n$; also let $\ell = 0$.
	\item  If $\ell < n$, increase $\ell$ by $1$ and let $p = \ell$. Otherwise, if $g$'s oracle tells her to stop, the process terminates and this is the final matching. Otherwise, let $g$ reject $x_g$ and set $p = x_g$.
	\item  Let $h$ be a random number uniformly chosen between $1$ and $n$. We say that man $p$ proposes to woman $h$. If $h \in W_p$ (i.e., if $p$'s proposal is redundant), repeat this step. Otherwise replace $W_p$ with $W_p \cup \{h\}$ and go on to step 4.
	\item  Increase $k_h$ by one. With probability $1 - \frac{1}{k_h}$, return to step 3 (in this case we say that woman $h$ rejects the proposal). Otherwise interchange $p \leftrightarrow x_h$ (that is, $h$ accepts the proposal and her former match is the next proposer). If the new value of $p$ is zero, or if $h = g$, go back to step 2;
	otherwise continue with step 3.
\end{enumerate}
We denote by $\womenrank$ the random variable which is the {\em average rank for women} at the end of the process (similarly, $\menrank$ is the average rank for men), and by $\topG{a}$ we denote the event where $g$ is matched with one of her top $a$ preferences.


\paragraph*{Probabilistic notations.} 
We say that an event occurs {\em with high probability (w.h.p.)}, if the probability of non-occurrence approaches zero as $n$ goes to infinity. Throughout the work we use the following multiplicative version of the {\em Chernoff Bound}:
Let $X_1, ..., X_n$ be independent random variables taking values in $\{0, 1\}$, $X = \sum_{i=1}^{n} X_i$ and $\mu = \mathbb{E}[X]$. Then, 
$${ \Pr(X\leq (1-\delta )\mu )\leq e^{-{\frac {\delta ^{2}\mu }{2}}},\qquad 0\leq \delta \leq 1,}$$ 
$${ \Pr(X\geq (1+\delta )\mu )\leq e^{-{\frac {\delta ^{2}\mu }{3}}},\qquad 0\leq \delta \leq 1,}$$ 
$${ \Pr(X\geq (1+\delta )\mu )\leq e^{-{\frac {\delta \mu }{3}}},\qquad 1\leq \delta.}$$ 



\section{Results in the Full Information Model}
\subsection{Main Results}
In this section we show that a strategic behavior of a single agent is expected to dramatically affect the entire outcome of the matching. In particular, the expected rank for women is polylogarithmic in $n$ compared to an expected rank of $\frac{n}{\log(n)}$ under a truthful regime. Similarly, with a probability that goes to $1$ as $n$ grows, the rank of almost all the women is polylogarithmic in $n$.

Our analysis is with respect to the process described in previous section which is equivalent to the Gale--Shapley algorithm when the agents's preferences are drawn uniformly at random and woman $g$ trim her preference list at an optimal point.
We first claim that the strategic behavior does not harm the stableness of the resulted matching.
\begin{lemma}\label{lem: all women matched}
	The process terminates with a stable matching where all the women are matched. 
\end{lemma}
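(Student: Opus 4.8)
The plan is to establish two facts separately: that the process terminates with \emph{all women matched}, and that the resulting matching is \emph{stable} for the true preferences. I would treat the ``all women matched'' claim first, since it controls when the oracle is permitted to stop. The key observation is that the first phase of the process (step 2 with $l<n$, before any oracle-driven rejection) is exactly the Gale--Shapley algorithm run to completion under the ``amnesia''/deferred-decisions coupling of \cite{knuth1990stable}: every man proposes, and by the standard Gale--Shapley correctness argument all $n$ men end up matched to $n$ distinct women, so \emph{all women are matched} at the end of this phase. In a balanced market with full lists no agent is ever rejected to the point of exhausting her list, so the matching is perfect.

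Next I would argue that the oracle never destroys this property. Whenever the oracle tells $g$ to stop, the process simply freezes the current assignment, so it suffices to show that at every moment \emph{after} the initial Gale--Shapley phase the configuration remains a perfect matching on the women. The only event that can leave a woman unmatched is reaching $p=0$ in step 4, i.e.\ a proposal cascade that ends with some man being permanently displaced. So the heart of the argument is: when $g$ rejects her current partner $x_g$ in step 2 and sends him back to propose, the induced chain of proposals and displacements is itself a run of Gale--Shapley starting from a matching that is already stable except for $g$'s vacancy; such a chain always terminates by having the displaced man land on a woman who had no partner or by a rejection chain that re-matches everyone, and since exactly one woman ($g$) is unmatched at the start of each rejection round, exactly one woman ends up unmatched — namely $g$ again, until she finally accepts. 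Hence at the terminal step, when $g$ accepts rather than rejects, every woman including $g$ is matched.

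For stability I would invoke the fact, stated earlier in the excerpt, that an optimal truncation strategy yields a matching that is stable with respect to the \emph{true} preferences. Concretely, the process is equivalent to $g$ truncating her reported list just above the partner she ultimately accepts: she rejects every man she likes less than her final partner and accepts the first acceptable proposer. Running Gale--Shapley on these reported preferences produces a matching $\mu$ that is stable for the reported preferences by \cite{gale1962college}; and since $g$'s truncation only removes from her list men she genuinely ranks below her match, no blocking pair can involve $g$ under the true preferences either, while for every other agent the reported and true preferences coincide. Therefore $\mu$ is stable for the true preferences, which is exactly the claim.

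The main obstacle I expect is the termination-and-matching bookkeeping in the second paragraph: one must carefully verify that each oracle-triggered rejection by $g$ launches a proposal chain (a ``rotation''-like cascade) that terminates and restores a perfect matching on the $n-1$ women other than $g$, so that $g$ is the unique unmatched woman at each decision point and the invariant ``exactly one woman unmatched iff $g$ is currently rejecting'' is preserved throughout. Making this invariant precise — and checking that the $p=0$ branch in step 4 corresponds exactly to a man who has just been bumped successfully rather than to a permanent vacancy — is the delicate part; everything else follows from the standard Gale--Shapley guarantees and the truncation-stability result already available to us.
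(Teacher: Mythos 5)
Your decomposition is the reverse of the paper's, and the reversal is where the trouble lies. The paper proves \emph{stability} first, by a direct case analysis on blocking pairs: pairs not involving $g$ inherit stability from the reported preferences (those agents report truthfully); for a pair $(g,b)$, either $b$ proposed to $g$ and was rejected, in which case $g$ prefers her final match (her tentative partners form an improving sequence of ``records''), or $b$ never proposed to $g$, in which case $b$ prefers his match to $g$. Then ``everyone is matched'' comes for free from the lone wolf theorem: the matching is stable for the \emph{true} preferences, some stable matching (e.g.\ the women-optimal one) matches everyone, and all stable matchings match the same set of agents. Your stability argument, routed through the truncation equivalence, is essentially sound and amounts to the same case analysis; but your argument for fullness replaces the paper's structural step with an analysis of the process dynamics, and that analysis has a hole.

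The hole is precisely the claim that each oracle-triggered rejection launches a chain that ``always terminates \ldots by a rejection chain that re-matches everyone.'' This is not an unconditional property of the dynamics, and it is false for at least one rejection: if $g$ were to reject her \emph{best stable husband} $b_0$, the ensuing cascade cannot end with $g$ holding a better proposer (no stable matching gives her one); what actually happens is that some man is eventually rejected by all $n$ women, exhausts his list, and the chain stalls with a permanently unmatched man rather than restoring a perfect matching. So your invariant ``each cascade ends with $g$ re-matched'' holds only for cascades occurring \emph{before} the optimal stopping point, and proving that requires invoking the optimality of the oracle --- concretely, the equivalence between the process stopped at $b_0$ and the Gale--Shapley run in which $g$ truncates her list just below $b_0$, a run whose outcome matches everyone and in which, therefore, no man exhausts his list. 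You explicitly flag this as ``the delicate part,'' but flagging it is not supplying it; without that argument the invariant is asserted, not proved. This is exactly the difficulty the paper's ordering sidesteps: once stability with respect to the true preferences is in hand, fullness follows from the lone wolf theorem with no bookkeeping about cascades at all. If you want to keep your more elementary route, add the truncation-equivalence termination argument; otherwise the cleaner fix is to adopt the paper's order --- stability first, then lone wolf.
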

\begin{proof}
	Let $\mathcal{M}$ be the matching resulting from the process. Notice that $\mathcal{M}$ is stable in the reported preferences. For any pair who does not include the strategic woman $g$, the stability for the reported preferences implies stability for the true preferences. For any pair that includes $g$ and a man $b$, either $g$ rejected him and got a better match, or $g$ didn't get an offer from $b$. In the former case, $g$ and $b$ are not a blocking pair due to $g$'s preferences, and in the latter case, $g$ and $b$ are not a blocking pair due to $b$'s preferences. Hence, the matching is stable.
	Since there is a matching where all agents are matched (for example, the women-optimal matching) in any stable matching all agents are matched (see \citet{mcvitie1970stable}).
\end{proof}
Note that since the matching is stable, $g$'s optimal strategy is equivalent to accept only the proposal from her matching in the women-optimal matching. This strategy can be computed in polynomial time using the Gale--Shapley algorithm when men and women switch their roles.

When the preferences are drawn uniformly, we have that with high probability, any player's best stable match is of polylogarithmic order and the expected rank when matched to this match is $\log(n)$ (see \citet{pittel1992likely}). We use these results for estimating the rank of $g$'s match.

\begin{lemma}\label{lem: rank of $g$ is log}
	For any function $\slowfunc(n)$ that goes to infinity and any large enough $n$, we have that:
	\begin{equation}
	\Pr(\topG{7\log^2(n)})> 1-\frac{1}{n},
	\end{equation}
	\begin{equation}
	\lim_{n\rightarrow\infty}\Pr(\topG{\slowfunc(n)\log(n)})= 1.
	\end{equation}
	
\end{lemma}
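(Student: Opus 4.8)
The plan is to control $g$'s rank through the number of distinct men who would propose to her if she followed the extreme policy of rejecting \emph{every} offer. Write this random number as $K$. The key structural observation is that in the rejection process $g$'s own preference order never influences the proposal dynamics: she rejects regardless of whom she is facing, so the identity of the set of men who ever propose to her is a function of the men's preferences and the proposal randomness alone, and is therefore independent of $g$'s preference list, which is a uniform random permutation. Using the oracle process of the algorithm, $g$ may stop at the best-ranked (to her) man among all her proposers; by Lemma~\ref{lem: all women matched} this yields a stable matching, and her optimal match is at least as good. Hence $g$'s rank is at most the minimum, in her list, of the ranks of the men who propose to her in the reject-everyone run, and it suffices to show that this minimum is small.

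Conditioned on $K$, the ranks that $g$ assigns to her $K$ proposers form a uniformly random $K$-subset of $\{1,\dots,n\}$ (by the independence noted above), so for any threshold $r$ the probability that the best proposer has rank exceeding $r$ is at most $(1-r/n)^K \le e^{-Kr/n}$. Everything therefore reduces to a lower bound on $K$. Concretely, if $K \ge n/(7\log(n))$ holds except with probability $o(1/n)$, then taking $r = 7\log^2(n)$ gives a conditional failure probability at most $e^{-7\log^2(n)\cdot K/n}\le e^{-\log n}=1/n$, which combined with the rare event $\{K < n/(7\log n)\}$ yields $P(\topG{7\log^2(n)}) > 1-\tfrac1n$ for large enough $n$. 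For the second inequality, fix any $\slowfunc(n)\to\infty$; since $K=\Omega(n/\log n)$ almost surely, $K\,\slowfunc(n)\log(n)/n\to\infty$, so $e^{-K\slowfunc(n)\log(n)/n}\to 0$ and $P(\topG{\slowfunc(n)\log(n)})\to 1$.

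The main obstacle is exactly this lower bound on $K$. The idea I would use is that $g$'s blanket rejection turns the effective market into one with $n$ men but only $n-1$ acceptable women, placing the men on the long side of an unbalanced market; the men are then forced to propose deeply, with expected depth $\Theta(n/\log n)$ per man and total (non-redundant) proposal count $\Theta(n^2/\log n)$. By the principle of deferred decisions each proposal lands on a uniformly random woman, so a $1/n$ fraction reach $g$, giving $\mathbb{E}[K]=\Theta(n/\log n)$; a Chernoff bound on the proposals directed at $g$, conditioned on the total proposal count, then supplies the concentration $K\ge n/(7\log n)$ with the required probability. This is where the real work lies, and it can be carried out either by adapting the analysis of unbalanced random matching markets or by a direct coupon-collector/record argument on the rejection chains that $g$ repeatedly initiates; the resulting polylogarithmic order of $g$'s best achievable match is precisely what is asserted to be known from \cite{pittel1992likely}.

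Assembling the pieces, the two statements follow by choosing the constant in the lower bound on $K$ so that $P\big(K < n/(7\log n)\big)=o(1/n)$ and then invoking the exponential estimate above: the stronger probability $1-\tfrac1n$ is bought at the cost of the weaker rank bound $7\log^2(n)$, while relaxing the probability to $1-o(1)$ lets the rank bound shrink to $\slowfunc(n)\log(n)$ for any $\slowfunc(n)\to\infty$. The only delicate point beyond the counting of $K$ is confirming that $g$ can in fact secure the best proposer as a stable outcome, but this is immediate from the oracle formulation together with Lemma~\ref{lem: all women matched}, so no separate monotonicity argument is needed.
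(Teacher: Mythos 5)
Your reduction is sound as far as it goes: by the oracle formulation and Lemma~\ref{lem: all women matched}, $g$'s optimal rank is at most the rank (in her list, which is independent of the proposal dynamics) of the best man among the $K$ distinct proposers in the reject-everyone run, and the bound $P(\text{best proposer rank} > r \mid K) \le e^{-Kr/n}$ is correct. But the proof has a genuine gap exactly where you acknowledge it: the claim $P\bigl(K < n/(7\log n)\bigr) = o(1/n)$ is asserted, not proven, and it is the entire mathematical content of the lemma. Deferring it to ``adapting the analysis of unbalanced random matching markets or a direct coupon-collector/record argument'' is not a proof; worse, the natural route through the paper's own machinery falls short. The total (redundant) proposal count in the reject-all run is $\Theta(n^2/\log n)$, so $g$ receives $\Theta(n/\log n)$ proposals \emph{counting redundancies}; to pass to distinct proposals you would invoke something like Lemma~\ref{lem: men offer at most log times to each} (no man repeats a proposal to the same woman more than $20\log n$ times), which only yields $K = \Omega(n/\log^2 n)$. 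Plugging that into your exponential bound with $r = 7\log^2(n)$ gives a failure probability of $e^{-\Theta(1)}$, a constant --- nowhere near the required $1/n$. So the concentration you need, at the strength you need it, does not follow from the ingredients you cite, and your closing remark that the conclusion ``is precisely what is asserted to be known from \cite{pittel1992likely}'' makes the argument essentially circular.

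By contrast, the paper does not attempt any of this: the first inequality is quoted directly as a corollary of Theorem 6.1 of \cite{pittel1992likely} (which establishes precisely the $\log^2$-rank bound for the best stable husband with failure probability $O(1/n)$), and the second follows in one line from Markov's inequality applied to the known fact that the expected rank of the best stable husband is $O(\log n)$: $P\bigl(\text{rank} > \slowfunc(n)\log(n)\bigr) \le c\log(n)/(\slowfunc(n)\log(n)) \to 0$. If you want a self-contained proof along your lines, the missing piece --- a lower bound of order $n/\log n$ on the number of \emph{distinct} proposers to $g$ with failure probability $o(1/n)$ --- is a substantial result in its own right (it is, in essence, Pittel's theorem), and you would also need your Markov-style alternative for the second claim, which the paper gets for free from the known expectation.
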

\begin{proof}
	The first inequality is a corollary of Theorem 6.1 in \cite{pittel1992likely}.
	The second is by Markov's inequality and the fact that the expected rank of the best stable matching is of order $\log(n)$.
\end{proof}

Since the process continues until $g$ get a match, we have that with high probability, the process continues until $g$ gets an offer from one of her top $\log^{1+\epsilon}(n)$ preferences.
We now show that the number of proposals made to $g$, conditioned on this event, is approximately the same as without conditioning on this event.
That is, if $g$ gets matched to one of her top $a$ options we may assume she got order of $\frac{n}{a}$ distinct proposals. Formally,

\begin{lemma} \label{lem: m is almost geometric}
	Let $a>2\log(n)$, and $M$ the total number of distinct (i.e., non-redundant) proposals made to $g$ before the termination of the algorithm. Then, $\Pr(M=m|\topG{a})<\frac{2a}{n}$ for large enough $n$ and any $m\in \mathbb{N}$.
\end{lemma}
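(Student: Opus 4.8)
The plan is to reduce $M$ to a clean combinatorial object by exploiting the independence between $g$'s own preference list and the rest of the market. First I would introduce the \emph{reject-all} sequence $b_1,b_2,\dots,b_T$: the ordered list of distinct men who would propose to $g$ if she rejected every offer to the bitter end. Since each man proposes to $g$ at most once we have $T\le n$, and since this thought experiment never consults $g$'s preferences, the pair $(T,(b_1,\dots,b_T))$ is independent of $g$'s (uniformly random) ranking of the men. The oracle-optimal behaviour of $g$ can then be read off this sequence: as long as $g$ keeps rejecting, the market evolves exactly as in the reject-all experiment, so any strategy lets her see some prefix $b_1,\dots,b_j$ and be matched to $b_j$. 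Hence her optimal match is $b_{j^\ast}$, where $j^\ast=\arg\min_j(\text{rank of }b_j\text{ in }g\text{'s list})$, the process stops at that proposal, and $M=j^\ast$ (up to the harmless $\pm 1$ of whether the accepted proposal is counted).

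Next I would analyse the conditional law given $T=t$. Writing $\rho_i$ for the rank of $b_i$ in $g$'s list, the vector $(\rho_1,\dots,\rho_t)$ is a uniformly random sequence of $t$ distinct elements of $\{1,\dots,n\}$, independent of the identities of the $b_i$. By exchangeability the position of the minimum, $M=j^\ast$, is uniform on $\{1,\dots,t\}$, and — crucially — it is independent of the \emph{value} of the minimum $R_g=\min_i\rho_i$ (choosing the $t$-subset fixes $R_g$, and an independent uniform ordering fixes the position). Therefore $\topG{a}=\{R_g\le a\}$ satisfies
\[
P(M=m,\topG{a}\mid T=t)=\tfrac{1}{t}\,P(R_g\le a\mid T=t),\qquad m\le t,
\]
and the same expression with the factor $\tfrac1t$ deleted governs $P(\topG{a}\mid T=t)$.

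Summing over $t$ and dividing, it remains to bound a ratio. For the numerator I would use a union bound over the $a$ favourable ranks: each fixed rank $\le a$ lands in the random $t$-subset with probability $t/n$, so $P(R_g\le a\mid T=t)\le at/n$, whence $\tfrac1t\,P(R_g\le a\mid T=t)\le a/n$ uniformly in $t$ (the bound holds trivially when $at/n>1$, since then $t>n/a$). Summing against $P(T=t)$ gives $P(M=m,\topG{a})\le a/n$. For the denominator I would invoke the previous lemma: since $\mathbb{E}[R_g]=\Theta(\log n)$ and $a>2\log n$, Markov's inequality yields $P(\topG{a})=P(R_g\le a)>1/2$ for large $n$. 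Combining, $P(M=m\mid\topG{a})\le (a/n)/P(\topG{a})<2a/n$.

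The main obstacle is the first paragraph: justifying rigorously that $g$'s oracle-optimal match is exactly the best man of the reject-all sequence, that $M$ equals the position of that man, and that this sequence is independent of $g$'s preferences. Once this structural decoupling is secured, the probabilistic content is the elementary fact that the position and the value of the minimum of a uniformly random arrangement are independent, and the remainder is a union bound combined with Markov's inequality.
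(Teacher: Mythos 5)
Your proposal is correct and takes essentially the same route as the paper's own proof: both arguments rest on the independence of $g$'s ranking from the sequence of distinct proposers, a union bound giving $P(M=m,\ \topG{a})\le \frac{a}{n}$, and then Bayes' rule together with $P(\topG{a})\geq \frac{1}{2}$ (obtained from the preceding lemma via Markov's inequality) to produce the factor of $2$. Your conditioning on the reject-all length $T$ and the independence of the position versus the value of the minimum is simply a more careful write-up of the step the paper disposes of in a footnote (``item $b$ will be in position $m$ in exactly $\frac{1}{n}$ of all possible orders'').
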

\begin{proof}
	We first examine $\Pr(M=m)$ unconditioned on the event $\topG{a}$ and look for the first time $g$ gets a proposal from a man who is her $d$-th option.
	Since the preference list of $g$ is determined online and independently of the proposals she gets, the order of ranks viewed by $g$ is uniform and the probability that $m$ proposals are needed before seeing $b$ is exactly $\frac{1}{n}$.
	Using the union bound, we deduce that $\Pr(M=m)<\frac{a}{n}$ when counting the proposals till $g$ gets a proposal from one of her $a$ most preferred men.
	By Bayes' rule, and conditioning on the fact that we are in the event $\topG{a}$, we get
	$$\Pr(M=m|\topG{a})=\frac{\Pr(M=m,\ \topG{a})}{\Pr(\topG{a})}\leq \frac{\Pr(M=m)}{\Pr(\topG{a})}\leq \frac{2a}{n},$$
	where the last inequality holds by Lemma \ref{lem: rank of $g$ is log}.
\end{proof}

Next, we show that although $g$ is the only strategic woman, almost all women in the game get a similar number of (not necessarily distinct) proposals. 

\begin{lemma} \label{lem: all men get m/4}
	Let $k$ be the total number of proposals made by men. If $k\geq\frac{mn}{2}$ then with a probability of at least $1-n\cdot e^{-\frac{m}{16}}$, all of the women get at least $\frac{m}{4}$ proposals.
\end{lemma}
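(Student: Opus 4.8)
The plan is to reduce the statement to a direct application of the multiplicative Chernoff bound followed by a union bound, exploiting the amnesia formulation of the process. The crucial modeling observation is that whenever a man chooses the target of a proposal, step~3 of the algorithm realizes the target uniformly over all $n$ women, independently of everything that has happened before (and counting redundant proposals as genuine proposals). Hence, if I fix an arbitrary woman $w$ and let $X_w$ denote the number of proposals $w$ receives among the $k$ proposals of $b$, then $X_w$ is a sum of $k$ independent $\{0,1\}$ indicators, i.e.\ distributed as $\binomial{k}{1/n}$, with mean $\mu = \mathbb{E}[X_w] = k/n$.

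From here the argument is short. First I would use the hypothesis $k \ge \frac{mn}{2}$ to record that $\mu = k/n \ge m/2$, so that the threshold $m/4$ lies a full multiplicative factor of at least two below the mean: $m/4 \le \mu/2 = (1-\tfrac12)\mu$. Then I would invoke the multiplicative Chernoff lower-tail inequality stated above with $\delta = \tfrac12$, which gives $P(X_w \le m/4) \le P\!\left(X_w \le (1-\tfrac12)\mu\right) \le e^{-\mu/8} \le e^{-m/16}$, where the final inequality again uses $\mu \ge m/2$. Finally, a union bound over the $n$ women yields $P(\exists\, w : X_w < m/4) \le n\,e^{-m/16}$, so that the complementary event---every woman receiving at least $m/4$ of $b$'s proposals, and hence at least $m/4$ proposals overall---has probability at least $1 - n\,e^{-m/16}$, which is exactly the claim.

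There is no substantive obstacle here; the statement is essentially a concentration-plus-union-bound computation. The only two points requiring genuine care are, first, justifying that the $k$ proposals are truly i.i.d.\ uniform over the women so that $X_w$ is exactly binomial---this is precisely where the amnesia description matters, since it is what makes each proposal target independent of the process history and is also why we must count all proposals \emph{with multiplicity} (``not necessarily distinct'') rather than only the distinct ones. Second, one must pick the Chernoff parameter so that the exponent reproduces the stated constant $\tfrac{1}{16}$; the choice $\delta = \tfrac12$ combined with the bound $\mu \ge m/2$ is calibrated to land exactly on $e^{-m/16}$, and no sharper estimate is needed.
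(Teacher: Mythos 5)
Your proof is correct and follows essentially the same route as the paper: model each of the $k$ proposal targets as i.i.d.\ uniform over the $n$ women (the amnesia formulation), apply the multiplicative Chernoff lower tail with $\delta=\tfrac12$ to get $e^{-m/16}$ per woman, and finish with a union bound over the $n$ women. If anything, you are slightly more careful than the paper, which plugs in $\mu=\tfrac{m}{2}$ directly, whereas you correctly take $\mu=k/n\geq\tfrac{m}{2}$ and use monotonicity of the lower-tail event.
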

\begin{proof}
	We bound the probability of an arbitrary woman to get fewer than $\frac{m}{4}$ proposals. Since at each proposal, the woman who gets a proposal is chosen uniformly and independently of previous proposals, we may use the Chernoff bound with parameters $\mu=\frac{m}{2}$ and $\delta=\frac{1}{2}$. Thus, the probability of such an event is less than $e^{-\frac{m}{16}}$.
	By the union bound, the probability of at least one woman not getting enough proposals is bounded by $n\cdot e^{-\frac{m}{16}}$.
\end{proof}

Lemma~\ref{lem: all men get m/4} shows that the number of proposals in the process is distributed in an approximately uniform way among the women. However, this is true only when counting each proposal no matter from which man it came. However, for distinct proposals the claim is not as strong. In the next lemma we show that when conditioning on the event that the process terminates with all men matched, we can get a similar result, albeit a slightly less tight one. This conditioning is reasonable by Lemma \ref{lem: all women matched}.

\begin{lemma}\label{lem: men offer at most log times to each}
	With probability $1-\frac{2}{n^2}$, no men propose to the same woman more than $20\log(n)$ times.
\end{lemma}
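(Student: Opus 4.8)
The plan is to fix a single man--woman pair $(b,w)$, control the number of times $b$'s proposal lands on $w$, and then take a union bound over all $n^2$ pairs. By the amnesia property built into Step~3 of the process, every execution of Step~3 in which $b$ is the proposer draws a woman uniformly and independently from all $n$ options (a redundant draw is simply re-drawn, so it too is a fresh uniform sample). Hence, letting $K_b$ denote the total number of draws $b$ ever makes, the number $N_{b,w}$ of times $b$ proposes to $w$ is, conditionally on $K_b$, distributed as $\binomial{K_b}{1/n}$. I would therefore aim to show that $K_b = O(n\log(n))$ with very high probability, so that the conditional mean of $N_{b,w}$ is $O(\log(n))$, and then apply the third form of the multiplicative Chernoff bound to obtain $P(N_{b,w} > 20\log(n)) \le n^{-4-\Omega(1)}$. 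Summing the failure probabilities over the $n^2$ Chernoff events (and the $O(n)$ events that some $K_b$ exceeds its bound) then yields the stated $1-\frac{2}{n^2}$.

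To bound $K_b$ I would decompose it according to how many distinct women $b$ has already proposed to. Writing $d_b \le n$ for the number of distinct women $b$ ever proposes to, the draws of $b$ split into blocks: while $b$ already has $i$ distinct proposals on record, the number of draws until he hits a new (non-redundant) woman is geometric with success probability $\frac{n-i}{n}$, independently of the past. Thus $K_b$ is a sum of independent geometric variables with $\mathbb{E}[K_b \mid d_b] = \sum_{i=0}^{d_b-1}\frac{n}{n-i} = n(H_n - H_{n-d_b}) \le n H_n = O(n\log(n))$, which is exactly the target order; the remaining task is concentration around this mean.

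The main obstacle is precisely this concentration. The geometric block corresponding to $i$ close to $n$ --- the regime in which $b$ has already proposed to almost every woman --- has success probability $\frac{n-i}{n}$ close to $0$ and therefore a heavy tail, so a naive Chernoff or MGF bound on $K_b$ breaks down there. The way I would resolve this is to first establish, on the high-probability events underlying Lemmas~\ref{lem: m is almost geometric} and~\ref{lem: all men get m/4}, that no man ever proposes to more than a constant fraction (indeed $o(n)$) of the women: each woman accumulates $\Theta(n/\mathrm{polylog}(n))$ distinct proposals, so every man is accepted after at most $O(n/\mathrm{polylog}(n)) = o(n)$ distinct proposals. Granting $d_b \le (1-\Omega(1))n$ uniformly over the men, every geometric block has success probability bounded below by a constant, so $K_b$ is a genuinely concentrated sum and $K_b = O(n\log(n))$ holds with probability at least $1-n^{-3}$ for each $b$. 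With this in hand the Chernoff step and the union bound go through as above. I would also remark that the constant $20$ is very generous: since the conditional mean of $N_{b,w}$ is in fact $o(\log(n))$, the true maximal number of repeated proposals is smaller, consistent with the observation that the simulations exhibit an even stronger effect.
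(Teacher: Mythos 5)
Your reduction to a Chernoff bound over at most $O(n\log(n))$ uniform draws is the right idea, and your coupon-collector decomposition of $K_b$ into geometric blocks is sound at the level of expectations. The genuine gap sits exactly where you flag ``the main obstacle'': your resolution of it does not work. You claim that, on the events underlying Lemmas~\ref{lem: m is almost geometric} and~\ref{lem: all men get m/4}, ``each woman accumulates $\Theta(n/\mathrm{polylog}(n))$ distinct proposals, so every man is accepted after at most $o(n)$ distinct proposals.'' This inference fails twice over. First, it is circular: Lemma~\ref{lem: all men get m/4} controls only proposals counted with redundancy; the passage from redundant to \emph{distinct} proposals received by a woman is Corollary~\ref{cor: women get at lest k/log distinct}, which is itself a corollary of the very lemma you are trying to prove, and you give no independent derivation. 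Second, even granting that every woman receives many distinct proposals, nothing follows about the \emph{maximal} number of distinct proposals made by a single man: the women's distinct-proposal counts control only the sum $\sum_b d_b$, i.e., the average over men, not the maximum; moreover a man who is ``accepted'' can later be bumped and resume proposing, so temporary acceptance does not terminate his sequence of draws. Consequently the uniform bound $d_b \le (1-\Omega(1))n$ --- the only thing that makes all your geometric blocks light-tailed --- is never established, and your Chernoff step has no hypothesis to run on.

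The paper closes this hole with a structural rather than statistical argument, which is worth comparing against. Since the process ends in a full matching (Lemma~\ref{lem: all women matched}), no man runs off the end of his preference list, so there is (up to his very last proposal) some woman he never proposes to. But if a man made more than $4n\log(n)$ draws, the probability that any fixed woman was never drawn is $\left(1-\frac{1}{n}\right)^{4n\log(n)} \le n^{-4}$; hence the event $\{r > 4n\log(n)\}$ itself has probability at most $n^{-4}$ per pair, with no concentration analysis of $K_b$ needed at all. On the complementary event one applies the Chernoff bound with $\mu = 4\log(n)$ and $\delta = 4$ to at most $4n\log(n)$ i.i.d.\ uniform draws, and a union bound over the $n^2$ pairs yields $1-\frac{2}{n^2}$. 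A secondary issue in your write-up: the claim that, conditionally on $K_b$, the count $N_{b,w}$ is distributed as $\binomial{K_b}{1/n}$ is not quite right, since $K_b$ is a stopping time of the same process and conditioning on its value biases the draws; the clean formulation (implicit in the paper) is to embed the draws in an infinite i.i.d.\ sequence and bound the probability that more than $20\log(n)$ of the first $4n\log(n)$ draws land on $w$.
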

\begin{proof}
	Assume that some man, $b$, proposes $r$ proposals in total and let $w$ be some woman.
	We start by showing that with high probability $b$ makes no more than $4n\log(n)$ proposals.  
	Because only full matchings are considered, we know that no man reaches the end of his preference list \footnote{If $b$ is matched to his least preferred option, he reach to the end of the list but only propose once to this woman.}.
	Let $p$ be the probability that $w$ gets no proposal from $b$. We have that $p=(1-\frac{1}{n})^r$ and for $r>4n\log(n)$, we have that $p\leq\frac{1}{n^4}$. 
	By the union bound, after $4n\log(n)$ proposals, the probability that not all women got proposed by $b$ is bounded by $\frac{1}{n^3}$.
	Under the assumption that $r$ is at most $4n\log(n)$ we use the Chernoff bound, with $\mu=4\log(n)$ and $\delta=4$. The probability that more than $20\log(n)$ proposals are made to the same women $w$ is at most
	$$\Pr(B_1)<e^{-4\log(n)} = \frac{1}{n^4}.$$
	
	Summing over all men and women and using the union bound, we get that with probability at least $1-\frac{1}{n^2}-\frac{1}{n^2}$, no man proposes to the same woman more than $20\log(n)$ times.
\end{proof}
\begin{corollary} \label{cor: women get at lest k/log distinct}
	If a woman gets $k$ proposals then with probability at least $1-\frac{2}{n^2}$ she gets at least $\frac{k}{20\log(n)}$ distinct proposals.
\end{corollary}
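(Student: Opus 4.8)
The plan is to obtain this as an immediate consequence of Lemma~\ref{lem: men offer at most log times to each} through a pigeonhole argument. First I would fix the woman $w$ in question and condition on the event $E$ that no man proposes to the same woman more than $20\log(n)$ times. By Lemma~\ref{lem: men offer at most log times to each} we have $P(E)\geq 1-\frac{2}{n^2}$, and since $E$ is a statement about \emph{all} man--woman pairs simultaneously, it in particular bounds the number of times any single man can propose to $w$.

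Next, working on the event $E$, I would decompose the $k$ total proposals received by $w$ according to their proposer. Let $d$ denote the number of distinct men who propose to $w$ (equivalently, the number of distinct proposals $w$ receives, since in step~3 of the process a proposal is redundant precisely when it repeats an earlier proposal by the same man to the same woman), and let $c_b$ be the number of proposals $w$ receives from man $b$, so that $\sum_b c_b = k$ where the sum ranges over the $d$ distinct proposers. On $E$ each $c_b$ is at most $20\log(n)$, hence $k=\sum_b c_b \leq 20\log(n)\cdot d$, which rearranges to $d\geq \frac{k}{20\log(n)}$. As this inequality holds whenever $E$ occurs, the claimed bound follows with probability at least $P(E)\geq 1-\frac{2}{n^2}$.

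The derivation is essentially mechanical, so I do not anticipate a genuine obstacle; the only point meriting a moment's care is the identification of ``distinct proposals'' with ``distinct proposers,'' which is what legitimizes counting $d$ in two ways and is guaranteed by the redundancy rule built into the process.
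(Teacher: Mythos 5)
Your proof is correct and is precisely the argument the paper intends: the corollary is stated without proof as an immediate consequence of Lemma~\ref{lem: men offer at most log times to each}, and your pigeonhole step (on the event that no man repeats a proposal to the same woman more than $20\log(n)$ times, the $k$ proposals to $w$ must come from at least $\frac{k}{20\log(n)}$ distinct men, each contributing exactly one distinct proposal) is the intended implicit derivation. Your care in identifying distinct proposals with distinct proposers, via the redundancy rule in step~3 of the process, is exactly the right justification.
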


We next prove formally the intuition that receiving many distinct proposals would lead to a good match.
\begin{lemma} \label{lem: k distinct -> rank an/k}
	Let $a\in\mathbb{R}$ be a constant, $w$ a woman, and $k$ the number of distinct proposals made to $w$. Then, with probability at least $1-e^{-a}$, $w$ is matched with a man who is ranked among her $\frac{an}{k}$ most preferred men.
\end{lemma}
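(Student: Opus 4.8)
The plan is to reduce the statement to a question about the minimum rank of a uniformly random $k$-subset. Recall that under the men-proposing algorithm a woman is ultimately matched to the man she most prefers among all those who ever proposed to her; hence if $k$ distinct men propose to $w$, her final partner is the best of these $k$ men according to her list. So it suffices to bound the probability that \emph{every} one of the $k$ proposers falls outside her top $t := an/k$ men. The key observation is that, in the deferred-decision (amnesia) model described above, the men choose the women they address uniformly at random and each woman's accept/reject decisions are generated by the independent coin flips of the $\tfrac{1}{k}$-rule in step 4, so the true preference order of $w$ is never consulted while the algorithm runs and may be revealed only at the end. Consequently, conditioned on the set of $k$ men who propose to $w$, the ranks these men occupy in $w$'s list form a uniformly random $k$-subset of $\{1,\dots,n\}$, independent of how they came to propose.

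First I would compute the probability that this random $k$-subset misses the top $t$ positions entirely. Drawing the $k$ ranks one at a time without replacement, this probability equals
$$\prod_{i=0}^{k-1}\frac{n-t-i}{n-i}\ \le\ \left(1-\frac{t}{n}\right)^{k}\ \le\ e^{-tk/n},$$
where the first inequality uses $\frac{n-t-i}{n-i}\le\frac{n-t}{n}$ for every $i\ge 0$ (cross-multiplying leaves the nonnegative term $ti$), and the second is the standard bound $1-x\le e^{-x}$. Substituting $t=an/k$ makes the exponent equal to $-a$, so the miss probability is at most $e^{-a}$. Taking complements, with probability at least $1-e^{-a}$ at least one of the $k$ proposers lies in $w$'s top $an/k$ men, and since her match is the best of the proposers, her match itself lies among her $an/k$ most preferred men; rounding $an/k$ to the nearest integer only helps.

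The only genuinely delicate point is the independence claim in the first paragraph: one must be sure that conditioning on the number $k$ of distinct proposers (and on their identities) does not bias $w$'s ranking of them. This is exactly what the principle of deferred decisions buys us --- because acceptances are simulated by the $\tfrac{1}{k}$-rule rather than by consulting a pre-drawn list, the event ``$k$ distinct men propose to $w$'' is measurable with respect to the men's uniform target choices and the acceptance coins alone, leaving $w$'s latent ranking uniform and independent of that event. Everything after that is the short hypergeometric tail estimate above, so this independence is the main (and essentially only) obstacle.
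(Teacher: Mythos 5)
Your proof is correct and follows essentially the same route as the paper's: both rest on the observation that $w$'s latent ranking is independent of which $k$ men propose to her, and both bound the probability that all $k$ proposers miss her top $an/k$ positions by $\left(1-\frac{a}{k}\right)^{k}<e^{-a}$. Your write-up merely makes explicit two points the paper leaves implicit — the without-replacement (hypergeometric) product and the deferred-decisions justification of the independence claim — which is a sound elaboration rather than a different argument.
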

\begin{proof}
	Since $w$'s preferences are independent of the proposals she gets, the probability of her not getting a proposals from any subset of $\frac{an}{k}$ men out of $n$ men is $\left(1-\frac{a}{k}\right)^{k}<e^{-a}$.
\end{proof}
The next lemma shows the connection between the number of proposals in the process and the final rank for the women side.
\begin{lemma} \label{lem: k distinct -> rank an/k FINALE}
	For all $k>0$, if all woman receive at least $k$ distinct proposals, then $\mathbb{E}\womenrank\leq \frac{2n}{k}$ and with probability at least $1-\frac{1}{n}$ it holds that $\womenrank<\frac{6n}{k}+5\log(n)$.
\end{lemma}
\begin{proof}
	Let $X_i$ be the random variable for the rank of woman $i$. We start by estimating $\mathbb{E}X_i$ using the identity $\mathbb{E}X_i = \sum_{j=0}^{n}{\Pr(X_i>j)}$, where $j=\frac{an}{k}$ (and thus $a=\frac{jk}{n}$). 
	\begin{eqnarray*}
	\sum_{j=0}^{n}{\Pr(X_i>j)}&\leq& \sum_{j=0}^{n}{e^{-\frac{jk}{n}}}\\
	&\leq&\sum_{j=0}^{\infty}{\left(e^{-\frac{k}{n}}\right)^j}\\
		&\leq&\frac{1}{1-e^{-\frac{k}{n}}}\\
		&\leq&\frac{1}{\frac{k}{n}-\frac{1}{2}\left(\frac{k}{n}\right)^2}\leq \frac{2n}{k},
	\end{eqnarray*}
	where the first inequality is due to Lemma~\ref{lem: k distinct -> rank an/k} with $a=\frac{jk}{n}$ and the last inequality true for $\frac{k}{n}<1$, which is our case.
	\stam{
	By Lemma~\ref{lem: k distinct -> rank an/k}, 
	$$\sum_{j=0}^{n}{\Pr(X_i>j)}\leq \sum_{j=0}^{n}{e^{-\frac{jk}{n}}} = \sum_{j=0}^{n}{\left(e^{-\frac{k}{n}}\right)^j}\leq \sum_{j=0}^{\infty}{\left(e^{-\frac{k}{n}}\right)^j}\leq \frac{1}{1-e^{-\frac{k}{n}}}, $$
	where the first inequality is due to Lemma~\ref{lem: k distinct -> rank an/k}.
	where the last inequality is due to the convergence of the sum of geometric series. This term is bounded in the following way:
	$$\frac{1}{1-e^{-\frac{k}{n}}}\leq \frac{1}{\frac{k}{n}-\frac{1}{2}\left(\frac{k}{n}\right)^2}\leq \frac{2n}{k}$$
	where the first inequality is due to the approximation $e^{-x}\leq 1-x+\frac{x^2}{2}$ and the second is true for $\frac{k}{n}<1$, which is precisely our case.}
	By the linearity property of the expectation, we get that $\mathbb{E}\womenrank\leq \frac{2n}{k}$ which conclude the first claim of the lemma.
	
	We now show that with high probability, $\womenrank$ is small. We order $X_i$ by size and summing them in segments where in each segment all ranks are roughly the same.
	$$\womenrank = \frac{1}{n}\sum_{a=1}^{k}{\sum_{\frac{(a-1)n}{k}<X_i\leq\frac{an}{k}}}{X_i}\leq\frac{1}{n}\sum_{a=1}^{k}{\sum_{\frac{(a-1)n}{k}<X_i\leq\frac{an}{k}}}{\frac{an}{k}}.$$
	For $a=1$, the number of women with rank less than $\frac{an}{k}$ is of course no more than $n$. Therefore,
	$$\womenrank\leq\frac{1}{n}\left(n\cdot\frac{1\cdot n}{k}+\sum_{a=2}^{k}{\sum_{\frac{(a-1)n}{k}<X_i\leq\frac{an}{k}}}{\frac{an}{k}}\right)$$
	For $a\geq 2$, we show that only a small number of women have a rank which is much larger than $\frac{n}{k}$.
	Using the Chernoff bound with $\mu=ne^{-a}$ and $\delta=4$, we get that the probability of more than $5n\cdot e^{-a}$ of the women getting a worse rank than $\frac{an}{k}$ is at most $e^{-\frac{4ne^{-a}}{3}}$.
	For $a=\log(n)-\log(\log(n))$, this probability is less than $n^{-\frac{4}{3}}$ and we can neglect the event that more than $5ne^{-(\log(n)-\log(\log(n)))}=5\log(n)$ women have rank greater than $\frac{n}{k}\cdot(\log(n)-\log(\log(n)))$ by assuming that the rank of any such women is $n$.
	For $a\in [2,\log(n)]$, by using the same bound we get that the number of women with rank $\frac{an}{k}$ or better is at least $1-5e^{-a}$. 
	If there are more than $1-5e^{-a}$ such women, we count the extra ones as if they are in the next segment (i.e., with rank $\frac{(a+1)n}{k}$ or better), thus only enlarging our estimation of $Y$. Thus, the number of women in each segment is at most $(1-5e^{-a})n-(1-5e^{-(a-1)})n=5n(e^{-(a-1)}-e^{-a})$. Hence ,
	$$\womenrank\leq\frac{1}{n}\left(n\cdot\frac{n}{k} +5\log(n)\cdot n +\sum_{a=2}^{\log(n)-\log(\log(n))}{5n(e^{-(a-1)}-e^{-a})\frac{an}{k}}\right) <\frac{6n}{k}+5\log(n)$$
	with probability larger than $1-\frac{1}{n}$.
\end{proof}

We are now ready to prove our main theorem. Denote by $\event{m}{s}$ the event that $m$ distinct proposals made to $g$ and $s$ (not necessarily distinct) proposals made in total to all women. Recall that $\topG{a}$ is the event that $g$ is matched with one of her top $a$ priorities. We divide our analyses into four disjoint events:
$$B_0^a=\left\{ \event{m}{s}\ |\ m \leq 64\log(n)\right\} \cap \topG{a}$$
$$B_1^a=\left\{ \event{m}{s}\ |\ m > 64\log(n),\ s\leq\frac{mn}{2}\right\} \cap \topG{a}$$
$$B_2^a=\left\{ \event{m}{s}\ |\ m > 64\log(n),\ s>\frac{mn}{2}\right\} \cap \topG{a} $$
and the event that $\topG{a}$ doesn't occur (denoted by $\bar{\topG{a}}$).
The next two Lemmas shows that the events $B_0^a$ and $B_1^a$ have negligible effect on the expected average rank. 
\begin{lemma} \label{lem: B0 small}
	For $a> 2\log(n)$, it holds that $P(B_0^a)<\frac{128a\log(n)}{n}$ and $P(B_0^a)\cdot \mathbb{E}(\womenrank|B_0^a) \leq 128a\log(n)$.
\end{lemma}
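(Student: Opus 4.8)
The plan is to handle the two claims separately: first pin down the probability of $B_0$, and then control the contribution of $B_0$ to the expected average rank by the crude bound that no rank can exceed $n$.

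For the probability bound, I would observe that $B_0$ is simply the event $\{\propG{g}\leq 64\log(n)\}\cap\topG{a}$, since summing the joint event $\event{m}{s}$ over all values of $s$ just leaves $\{\propG{g}=m\}$. Using $P(B_0)\leq P(\propG{g}\leq 64\log(n)\mid\topG{a})$ (valid because $P(\topG{a})\leq 1$), I would decompose this into a sum over the possible values $m=1,\ldots,\lfloor 64\log(n)\rfloor$ of $P(\propG{g}=m\mid\topG{a})$. Here I invoke Lemma \ref{lem: m is almost geometric}, which gives $P(\propG{g}=m\mid\topG{a})<\frac{2a}{n}$ for each $m$ (applicable since we are in the regime $a>2\log(n)$). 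Summing the at most $64\log(n)$ terms yields $P(B_0)<64\log(n)\cdot\frac{2a}{n}=\frac{128a\log(n)}{n}$, the first inequality.

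For the second inequality, the key realization is that on $B_0$ we have no useful control over the women's average rank — this is precisely the ``bad'' event in which $g$ stops after very few proposals, so other women may also be poorly matched. Rather than trying to bound the conditional average rank sharply, I would use the trivial fact that every woman's rank is at most $n$, so the average rank is at most $n$ deterministically. Writing $P(B_0)\cdot\mathbb{E}(\text{average rank}\mid B_0)=\mathbb{E}[\text{average rank}\cdot\mathbf{1}_{B_0}]$ and bounding the average rank by $n$ gives $\mathbb{E}[\text{average rank}\cdot\mathbf{1}_{B_0}]\leq n\cdot P(B_0)<128a\log(n)$, as required.

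The only nontrivial ingredient is the first step, and even there the work has been done in Lemma \ref{lem: m is almost geometric}; the role of this lemma is to package the conclusion that although $B_0$ carries an uncontrolled rank, its probability is small enough (linear in $a\log(n)/n$) that its weighted contribution stays of order $a\log(n)$. I expect no genuine obstacle, the main points being to sum the almost-geometric bound only over the $O(\log n)$ relevant values of $m$, and to use the deterministic $n$-bound on rank rather than attempting a sharper conditional estimate.
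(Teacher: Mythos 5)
Your proposal is correct and follows essentially the same route as the paper: bound $P(B_0)$ by applying Lemma \ref{lem: m is almost geometric} to each of the at most $64\log(n)$ values of $m$ and summing (the paper phrases this as a union bound), then multiply by the trivial deterministic bound of $n$ on the average rank. If anything, your handling of the conditioning on $\topG{a}$ (via $P(B_0)\leq P(\propG{g}\leq 64\log(n)\mid\topG{a})$) is slightly more careful than the paper's wording, which cites the conditional bound of the lemma as if it were unconditional.
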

\begin{proof}
	For the first event, by Lemma \ref{lem: m is almost geometric}, the probability that exactly $m$ proposals are made to $g$ during the process is at most $\frac{2a}{n}$.
	Using the union bound we get that the probability of $g$ getting less than $64\log(n)$ proposals is bounded by $\frac{128a\log(n)}{n}$.
	Since the maximal possible rank in a full matching is $n$, we have that $P(B_0^a)\cdot E(\womenrank|B_0^a) \leq 128a\log(n)$.
\begin{lemma} \label{lem: B1 small}
	For $a> 2\log(n)$, it holds that $P(B_1^a)<\frac{1}{n^3}$ and $P(B_1^a)\cdot \mathbb{E}(\womenrank|B_1^a) \leq \frac{1}{n^2}$.
\end{lemma}
	In the second event we have that $g$ gets at least $64\log(n)$ proposals.
	We bound the probability that out of the $s$ proposals made in total, at least $m$ are made to $g$ using the Chernoff bound. Set $\mu=\frac{s}{n}\leq \frac{m}{2}$. Then $P(B_1^a)<e^{-\frac{m}{6}}<e^{-64log(n)}=\frac{1}{n^3}$. 
	Since the maximal possible rank in a full matching is $n$, we have that $\Pr(B_1^a)\cdot E(\womenrank|B_1^a) \leq \frac{1}{n^2}$.
\end{proof}
The next lemma implies the first part of Theorem \ref{thm: w.h.p wlog2}, and it used to prove the bound on the average men's rank. The bound for the expected women's rank is be proven separately.
\begin{lemma} \label{lem: w.h.p wlog2}
	Let $\slowfunc(n)$ be any function that goes to infinity as a function of $n$. Then, there exist constants $c,d>0$ such that w.h.p. $\womenrank<c\cdot\slowfunc(n)\log^2(n)$, and any arbitrary women get a match with one of their top $d\cdot\slowfunc(n) \log^2(n)$ options.
\end{lemma}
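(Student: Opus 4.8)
The plan is to condition on the event $\topG{a}$ that $g$ eventually accepts one of her top $a$ men, for a carefully chosen $a$, and to show that on this event the number of distinct proposals $g$ absorbs before stopping, $\propG{g}$ (the quantity $M$ of Lemma~\ref{lem: m is almost geometric}), is large with high probability. The intuition is that $g$ acts as a sink: every man she rejects is thrown back into the proposal pool, so a large $\propG{g}$ forces a large total proposal count $\propTot$, which by the uniformity of proposals floods every other woman with offers and hence gives her a good match. Concretely, I would take $a=\slowfunc(n)^{1/3}\log(n)$, so that $\topG{a}$ holds almost surely by the second inequality of Lemma~\ref{lem: rank of $g$ is log} and $a>2\log(n)$ as required by Lemma~\ref{lem: m is almost geometric}, and fix the threshold $T=\frac{n}{\slowfunc(n)^{2/3}\log(n)}$ for $\propG{g}$. (If $\slowfunc(n)$ grows so fast that $c\,\slowfunc(n)\log^2(n)\ge n$ the statement is vacuous since every rank is at most $n$, so I may assume $\slowfunc(n)=o(n/\log^2(n))$, which in particular makes $T\gg 64\log(n)$.)

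First I would discard three negligible events. The event $\neg\topG{a}$ has probability $o(1)$ by Lemma~\ref{lem: rank of $g$ is log}. The event $\{\propG{g}\le T\}\cap\topG{a}$ has probability at most $\frac{2aT}{n}=2\,\slowfunc(n)^{-1/3}=o(1)$, obtained by summing the flat bound $P(\propG{g}=m\mid\topG{a})<\frac{2a}{n}$ of Lemma~\ref{lem: m is almost geometric} over $m\le T$; this also absorbs the small-$m$ event $B_0$. Finally the event $\{\propG{g}>T,\ \propTot\le \tfrac{\propG{g}n}{2}\}\cap\topG{a}$ is contained in $B_1$ and so has probability at most $n^{-3}$ by Lemma~\ref{lem: B1 small}. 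A union bound leaves the ``good'' event $G=\{\propG{g}>T,\ \propTot>\tfrac{\propG{g}n}{2}\}\cap\topG{a}$ with probability $1-o(1)$.

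Next I would run the established chain of lemmas on $G$. Since $\propTot>\tfrac{\propG{g}n}{2}$ with $\propG{g}>T\gg 64\log(n)$, Lemma~\ref{lem: all men get m/4} gives that every woman receives at least $\tfrac{\propG{g}}{4}>\tfrac{T}{4}$ proposals (failing with probability $o(1)$), and Corollary~\ref{cor: women get at lest k/log distinct} upgrades this to at least $k_0:=\tfrac{T}{80\log(n)}=\tfrac{n}{80\,\slowfunc(n)^{2/3}\log^2(n)}$ distinct proposals per woman. For the average rank I apply Lemma~\ref{lem: k distinct -> rank an/k FINALE} with $k=k_0$: with probability $1-n^{-2}$ the average rank is below $\tfrac{6n}{k_0}+15\log(n)=480\,\slowfunc(n)^{2/3}\log^2(n)+15\log(n)=o\!\left(\slowfunc(n)\log^2(n)\right)$, hence below $c\,\slowfunc(n)\log^2(n)$ for any fixed $c>0$ once $n$ is large. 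For a fixed arbitrary woman I apply Lemma~\ref{lem: k distinct -> rank an/k} with $k=k_0$ and $a'=\tfrac{d\,\slowfunc(n)^{1/3}}{80}$, chosen so that $\tfrac{a'n}{k_0}=d\,\slowfunc(n)\log^2(n)$; her rank then exceeds $d\,\slowfunc(n)\log^2(n)$ with probability at most $e^{-a'}=e^{-d\slowfunc(n)^{1/3}/80}=o(1)$. A final union bound over these $o(1)$ failure probabilities and $P(G^{c})$ yields both conclusions with high probability.

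The main obstacle is the balancing of parameters. The only handle on $\propG{g}$ is the \emph{flat} bound of Lemma~\ref{lem: m is almost geometric}, giving $P(\propG{g}\le T)\lesssim\frac{aT}{n}$; pushing this to $o(1)$ forbids $T$ from reaching the scale $\frac{n}{a}$ that a direct attack on rank $\Theta(\slowfunc(n)\log^2(n))$ with $a=\Theta(\slowfunc(n)\log(n))$ would require. Inserting the intermediate slow factor $\slowfunc(n)^{1/3}$ into both $a$ and $T$ buys exactly the slack that makes every error term vanish while still delivering $\frac{n}{T}=\slowfunc(n)^{2/3}\log(n)$ and $a'=\Theta(\slowfunc(n)^{1/3})\to\infty$ simultaneously. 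A secondary point I would verify is that conditioning on the macroscopic quantities $\topG{a}$, $\propG{g}$ and $\propTot$ does not disturb the uniform allocation of the remaining proposals among the non-strategic women, nor the independence of their preference lists, so that the Chernoff-based Lemmas~\ref{lem: all men get m/4}--\ref{lem: k distinct -> rank an/k} still apply on $G$; this holds because $\propG{g}/\propTot<2/n$ on $G$, so removing $g$'s proposals changes each remaining woman's offer count by a negligible amount.
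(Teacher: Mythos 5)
Your proposal is correct and follows essentially the same route as the paper's own proof: restrict to $\topG{a}$ via Lemma~\ref{lem: rank of $g$ is log}, discard $B_0$ and $B_1$, lower-bound $\propG{g}$ by summing the flat bound of Lemma~\ref{lem: m is almost geometric}, and then chain Lemma~\ref{lem: all men get m/4}, Corollary~\ref{cor: women get at lest k/log distinct}, Lemma~\ref{lem: k distinct -> rank an/k FINALE} (for the average) and Lemma~\ref{lem: k distinct -> rank an/k} (for an individual woman). The only difference is bookkeeping: you split the slack as $\slowfunc^{1/3}$/$\slowfunc^{2/3}$ where the paper uses $\sqrt{\slowfunc}$ for $a$ and a second slow function $\slowfunc_0$ for the per-woman claim, which is a cosmetic variation on the same parameter balancing.
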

\begin{proof}
	Let $a =\sqrt{\slowfunc(n)}\log(n)$. By Lemma \ref{lem: rank of $g$ is log}, the event $\topG{a}$ happens with probability that goes to $1$ and by Lemmas~\ref{lem: B0 small} and~\ref{lem: B1 small} the events $B_0^a$ and $B_1^a$ are negligible. Thus, we may assume that event $B_2^a$ happen. 
	By Lemma~\ref{lem: m is almost geometric}, the probability that $m<\frac{n}{\slowfunc(n)\log(n)}$ is at most $\frac{n}{\slowfunc(n)\log(n)}\cdot\frac{2a}{n}=\frac{2}{\sqrt{\slowfunc(n)}}$ which goes to $0$.
	By Lemmas~\ref{lem: all men get m/4} and~\ref{lem: men offer at most log times to each}, with probability at least $1-\frac{2}{n^2}$, each woman gets at least $\frac{m}{80\log(n)}$ distinct proposals.
	We conclude using Lemma~\ref{lem: k distinct -> rank an/k FINALE} that in total, 
	$$\womenrank<\frac{6n}{\frac{m}{80\log(n)}}+5\log(n) \leq 480\slowfunc(n)\log^2(n)+5\log(n),$$
	with probability that goes to $1$ as $n$ grows.
	
	We use the same arguments to show that with high probability, almost all women get a good matching. With high probability, each woman gets at least $\frac{n}{80\slowfunc(n)\log^2(n)}$ distinct proposals (with $\slowfunc(n)$ as before). By Lemma~\ref{lem: k distinct -> rank an/k}, the probability that a woman with that number of proposals getting a rank worse than $80\slowfunc(n)\slowfunc_0(n)\log^2(n)$ is smaller than $e^{-\slowfunc_0(n)}$.
	$\slowfunc_0(n)$ and $\slowfunc(n)$ are chosen such that $\slowfunc_0(n)\slowfunc(n)<\log^\epsilon(n)$, which completes the proof.
\end{proof}


\citet{pittel1992likely} showed that with probability that goes to $1$, any stable matching with an average women's rank of order of $k$ has an average men's rank that is of order $\frac{n}{k}$. Thus, the first part of Lemma \ref{lem: w.h.p wlog2} also bounds the average men's rank.
\begin{corollary}
	For any $\epsilon>0$, w.h.p. the average men's rank, $\menrank$, is of order $\Omega\left(\frac{n}{\log^{2+\epsilon}(n)}\right)$.
\end{corollary}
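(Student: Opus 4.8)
The plan is to obtain the lower bound on the average men's rank as a direct consequence of the upper bound on the average women's rank already established in Lemma~\ref{lem: w.h.p wlog2}, by invoking the complementary relationship between the two average ranks proved in \cite{pittel1992likely}. The whole point is that in a random instance the average men's rank and the average women's rank are, up to order, inversely proportional: their product is of order $n$. Thus a small women's rank \emph{forces} a large men's rank, which is exactly the sense in which one strategic woman damages the men.

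First I would specialize Lemma~\ref{lem: w.h.p wlog2} to the right growth rate. Setting $\slowfunc(n)=\log^{\epsilon}(n)$ --- which indeed tends to infinity with $n$ --- the lemma yields that, w.h.p., the average women's rank is at most $c\log^{2+\epsilon}(n)$, i.e.\ of order $O(\log^{2+\epsilon}(n))$. By Lemma~\ref{lem: all women matched} the output of the process is a stable matching for the true preferences, so it is a legitimate object on which to apply the rank relationship.

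Second I would recall the quoted fact from \cite{pittel1992likely}: w.h.p.\ over the choice of the random instance, every stable matching whose average women's rank $\womenrank$ is of order $k$ has an average men's rank $\menrank$ of order $n/k$. Applying this with $k$ of order $\log^{2+\epsilon}(n)$, and noting that the upper bound on $\womenrank$ translates (through the inverse proportionality) into a \emph{lower} bound on $\menrank$, gives that the average men's rank is $\Omega\!\left(n/\log^{2+\epsilon}(n)\right)$. Intersecting the two high-probability events --- the women's-rank event from Lemma~\ref{lem: w.h.p wlog2} and the rank-relationship event from \cite{pittel1992likely} --- and using that a finite intersection of w.h.p.\ events is again w.h.p., we conclude that w.h.p.\ the average men's rank is no better than $\Omega\!\left(n/\log^{2+\epsilon}(n)\right)$, matching the second part of Theorem~\ref{thm: w.h.p wlog2}.

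The step that requires the most care is the direction in which the rank relationship is used. The relationship is quoted as an equivalence of orders, yet we only have an upper bound on the women's rank, not a matching lower bound; one must check that its monotone, inverse-proportional form indeed lets an upper bound on $\womenrank$ produce a lower bound on $\menrank$, rather than needing two-sided control of the women's rank. Since the relationship is taken as a black box from \cite{pittel1992likely}, the remaining combination is routine bookkeeping of high-probability events.
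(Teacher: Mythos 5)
Your proposal is correct and follows exactly the paper's own route: the paper derives the corollary in one step by combining the first part of Lemma~\ref{lem: w.h.p wlog2} with the cited result from \cite{pittel1992likely} that w.h.p.\ any stable matching with average women's rank of order $k$ has average men's rank of order $\frac{n}{k}$. Your additional care about the direction of the implication (upper bound on women's rank forcing a lower bound on men's rank) and about intersecting the two high-probability events is exactly the implicit content of the paper's one-line argument.
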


In the last part of this section, we bound the expected rank of the women.

\begin{lemma}\label{lem:B2}
	There exist $c>0$ such that if $a> 2\log(n)$, then $\mathbb{E}(\womenrank|B_2^a)<ca\log^2(n)$.
\end{lemma}
\begin{proof}
	Conditioned on the event $B_2^a$, we have that the total number proposals $s$ is at least $\frac{mn}{2}$. Since $m>64\log(n)$, by Lemmas \ref{lem: all men get m/4} and \ref{lem: men offer at most log times to each}, we have that all the women get at least $\frac{m}{80\log(n)}$ distinct proposals with probability at least $1-\frac{2}{n^2}-\frac{1}{n^3}$.
	For $a>2\log(n)$, Lemma \ref{lem: m is almost geometric} holds and the probability for receiving exactly $m$ proposals is no more than $\frac{2a}{n}$. Thus, by Lemma \ref{lem: k distinct -> rank an/k FINALE} we conclude that with probability $1-\frac{2}{n^2}-\frac{1}{n^3}$:
	$$\mathbb{E}[\womenrank|B_2^a]\leq \sum_{m=64\log(n)}^{n}{\left(\frac{2a}{n}\right)\cdot\left(\frac{2n}{\frac{m}{80\log(n)}}\right)}\leq c'a\log^2(n),$$
	for some $c'>0$. For the cases this claim does not hold we assume that $\womenrank=n$ and in total there exist $c>0$ as desired.
\end{proof}
Finally, we show that setting $a$ to $7\log^2(n)$ as Lemma~\ref{lem: rank of $g$ is log} suggests, leads to the conclusion that the expected women's rank is $O(\log^4(n))$
\begin{lemma}
	$\mathbb{E}[\womenrank]=O(\log^4(n))$.
\end{lemma}
\begin{proof}
	We write the expectation as the sum of four disjoint terms:
	$$\mathbb{E}[\womenrank] = \Pr(B_0^a)\mathbb{E}[\womenrank|B_0^a]+\Pr(B_1^a)\mathbb{E}[\womenrank|B_1^a]+\Pr(B_2^a)\mathbb{E}[\womenrank|B_2^a]+\Pr(\bar{\topG{a}})\mathbb{E}[\womenrank|\bar{\topG{a}}]$$
	Setting $a=7\log^2(n)$ and using Lemmas~\ref{lem: rank of $g$ is log},~\ref{lem: B0 small},~\ref{lem: B1 small} and~\ref{lem:B2} yields:
	$$\mathbb{E}[X] \leq 128\log^3(n)+\frac{1}{n^2}+1\cdot 7c\log^4(n)+\frac{1}{n}\cdot n =\O(\log^4(n))$$
	which completes the proof of Theorem~\ref{thm: w.h.p wlog2}.
\end{proof}

\subsection{The Set of Stable Husbands}\label{sec: perms}
An alternative point of view about the effect of a strategic woman is given via observation on the set of stable husbands for each woman. A man $m$ is called a \textit{stable husband} of $w$ if there is a stable matching in which they are matched together.

By the optimality of $g$'s strategy, $g$ gets a proposal from her best stable husband $b_0$. Due to the lattice structure of the set of stable matchings, it is known that from $b_0$'s perspective, she is his worst stable wife (e.g., \cite{teo1998geometry}). Thus, we know that at some point in the process $b_0$ proposed to his second-worst woman, $g_0$, and was rejected.
Since all the women are eventually matched, we know that $g_0$ ends up with a proposal from her best stable husband.
In the same manner, this event initiates a series of events that eventually prove some set of women are all guaranteed to be matched to their best stable husband at the end of the process.

Assume that the size of this set is distributed like the size of a cycle in a random permutation. Then, with constant probability, at least a fraction of the women are matched to their best stable husband. The experimental results in part \ref{sec: experimental} support this intuition and show that the expected number of women getting their best stable matching is $\frac{n}{2}$.

Note that $g$ received a proposal not only from her second-best stable husband but also from all of her possible stable husbands. Let $g$'s second-best stable husband be $b_1$. If we could reason that she is his second-worst stable wife, we would have got a new set of women guaranteed to be matched to their second-best stable possible husband. 
By Theorem 2 in \cite{teo1998geometry}, this is true when ordering the stable spouses with repetitions (one for each possible stable matching). Unfortunately, this may not be the case when counting each spouse only once. It will be interesting to see if we could show that w.h.p. this is still true even for the distinct case.

It seems that it would be possible to show that at least $n\cdot2^{-k}$ of the women are expected to be matched with their $k$th stable husband.
Furthermore, it is known \cite{pittel1992likely} that the expected rank of a woman who is matched to her best stable husband is $\O(\log(n))$.
Looking at the process described in the previous section, we observe that each temporarily best proposer to the strategic woman in the rejecting part of the process (i.e., after the men-optimal match is reached) is a stable husband and is randomly located in the proposed-to woman's preference list. Hence, the rank of the woman when she is matched to her $k$th stable husband is expected to be twice as good compared to when she is matched to the $(k+1)$th stable husband. Hence, the expected rank of a woman who is matched to her $k$th stable husband is $2^{k-1}\log(n)$.
Notice that some of the women have few stable husbands (see \cite{pittel2007number}) and it is unclear how to handle this in a rigorous way.
Combining these two observations rigorously hopefully will yield an improved bound of order $\frac{1}{2}\log^2(n)$ for the expected rank of the women.
This observation demonstrates the core convergence since in a strategic environment many of the agents will be left with a single option to be matched with.

\subsection{Comparison to Unbalanced Markets}
An unbalanced market of $n$ men and $n-1$ women can be described as a balanced market of size $2n$ in which one woman rejects all of the proposals made to her. In this case, the men's utility from being matched with her is irrelevant and may be ordered arbitrarily.
In this market, assuming uniform and independent preferences, \citet{ashlagi2017unbalanced} shown that the average rank of the women is of order $O(\log(n))$ even in the women-pessimal matching.
In the context of our paper, a truthful balanced market can be viewed as a market in which this one woman decides not to trim her preference list at all.
In this case, the average women's rank is of order $\O(\frac{n}{\log(n)})$ in the woman-pessimal match.

Adding the results of this paper, we get that a strategic woman can affect this range of ranks in an almost continuous way.
To see this, observe that as long as all the men are matched at the end of the process, all of the lemmas in the previous section hold if we choose a suitable set $\topG{a}$. 
An interesting corollary is that if $g$'s list is of a length which is of order $\sqrt{n}$ and all the other lists are long enough, we get that the women's rank is around $\sqrt{n}$ up to a logarithmic factor. By the hyperbola matching rule, this match captures some of the properties of fairness between the sexes.

Another observation is for the case of lists that are uniformly ordered of different lengths. This scenario can be described as a market with full preferences where each woman decides how she trims her list. In this case, the expected rank is asymptotically determined by the woman with the shortest list. By using different truncations, the women may induce any possible stable matching (see Theorem 4 in \cite{mcvitie1971stable}). It should be noted that when there are many strategic agents there are more profitable actions that a coalition of agents may achieve.

\subsection{Experimental Results}\label{sec: experimental}
In the theoretical part of the work, we only estimated the market behavior as the market size grows to infinity.
For a better understanding of the behavior in typical market sizes, we simulated different markets. We tested our settings for market with $n$ up to $10,000$ in leaps of $20$. For any market
size, the mean of average ranks over $100$ iterations was calculated.
Figure \ref{fig:average rank} shows the mean of the average ranks for men and women, in the truthful and the strategic scenarios. Figure~\ref{fig:average rank closeup} zooms in on the mean of the average ranks for women in the strategic scenario.
\begin{figure}[ht] 
	\centering
	\includegraphics[width=1\textwidth]{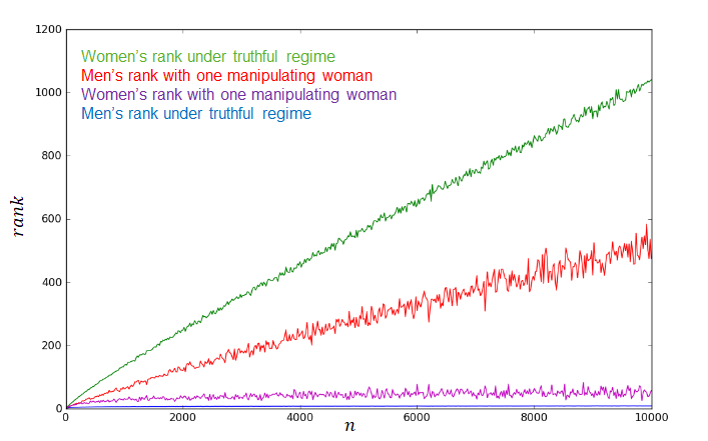}
	\caption{Mean average rank of men and women in both scenarios}
	\label{fig:average rank}
\end{figure}
\begin{figure}[ht]
	\centering
	\includegraphics[width=1\textwidth]{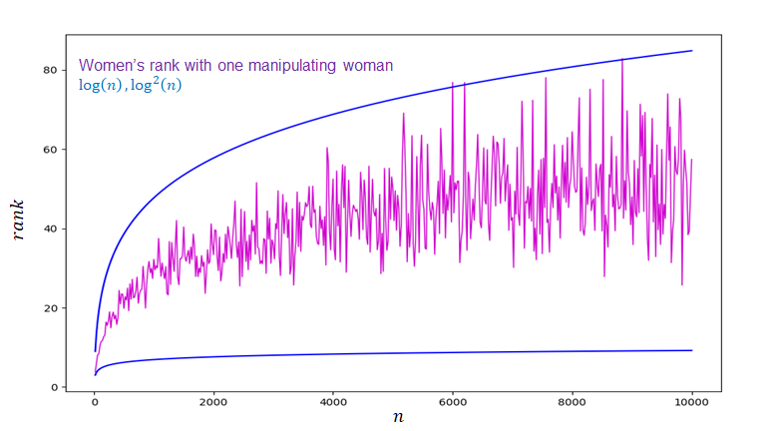}
	\caption{Mean average rank of women with strategic agent compared to $\log(n)$ and $\log^2(n)$}
	\label{fig:average rank closeup}
\end{figure}

For the same settings, we counted the number of women who got their best stable
husband in the strategic scenario. The simulation shows that in expectation half of the women get their best stable husband whenever another woman acts strategically.
The results are shown in Figure \ref{fig:BS}. The simulation supports the intuition described in Section \ref{sec: perms}.
\begin{figure}[ht]
	\centering
	\includegraphics[width=0.8\textwidth]{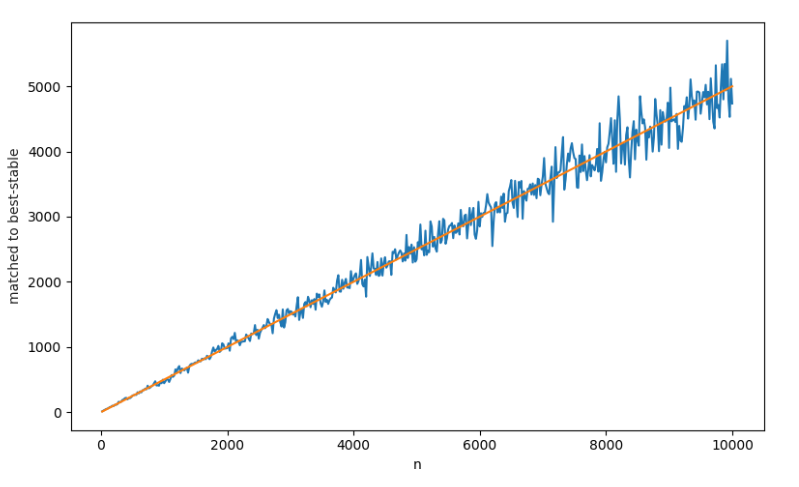}
	\caption{The number of women who are matched with best stable match compared to $\frac{n}{2}$.}
	\label{fig:BS}
\end{figure}

\section{Discussion}
\stam{
	\subsection{The Exact Expected Rank}
	We've shown that the expected women's rank is $\O(\log^4(n))$. A logarithmic factor was added since the number of distinct proposals to an arbitrary woman was bounded by a fraction of order $\O(\frac{1}{\log(n)})$ of the distinct proposals made to $g$. In fact, in most cases, we can show that those two are the same up to a constant multiplicative factor.
	Showing this rigorously will prove that the expected rank is of order $\O(\log^3(n))$. It will also help to conclude that the expected rank is $\Omega(\log^2(n))$. However, additional work is needed in this direction. An alternative direction is to formalize the observations stated in Section \ref{sec: perms}.
}
\subsection{Strategic Behavior in the Bayesian Model}\label{sec: Bayesian information}
In the settings of this work, $g$ is assumed to have full information on the market and thus knows how to choose an optimal strategy. This assumption is later relaxed to having access to an oracle that hints when to terminate a process. 
Most of the Lemmas do not directly use $g$'s knowledge of other preferences. Although it is tempting to deduce that this implies exactly the same results when $g$ knows only the other lists' distributions, it is not the case. For example, assume that $g$ has an extremely high utility for not staying single and hardly distinguishes between all possible mates. In that case, it is likely that $g$'s optimal strategy is just to report truthfully and thus guarantees that she will be matched. But now, $g$ has no effect on the rest of the market and the expected rank for all agents is unchanged.
Note that although $g$'s rank isn't monotone in her truncation and she needs to be careful not to trim her list too much, the average rank of all the other women is indeed monotone in $g$'s decision and they can only benefit from $g$ being too picky.

The optimal strategy of $g$ given she have Bayesian information is discussed in \cite{roth1999truncation,coles2014optimal}.
When the uniform independent assumption holds, a truncation strategy is still optimal for her when she maximizes her expected utility, and the exact point of truncation is determined by her utility function. Furthermore, \citet{coles2014optimal} showed that a reduction in the risk aversion causes the length of $g$'s list after truncation to be negligible compared to the number of agents.
If we assume truncation in $\Theta(\log^2(n))$ we get similar results to those of the full knowledge model: 
either the strategy was successful and the expected rank is $\O(\log^4(n))$ or the strategic woman is unmatched and we are in a $n$-men $(n-1)$-women case. In this case, it should be reasonable to say that the other women get rank $\log(n)$ due to \citet{ashlagi2017unbalanced}. A more rigorous analysis is needed since the fact that $g$'s rank, when matched with her best stable husband, is worse than her truncation location might imply something about the preference lists of the other agents. 
Our result is different from \cite{ashlagi2017unbalanced} in the sense that in our case the \textit{missing woman} not only stays in the game but promises herself the best possible match. Our results can also be seen as a refinement of \cite{ashlagi2017unbalanced} that allows the understanding of the core convergence depending on the degree of selectivity (truncation) that the strategic players are willing to risk.

\subsubsection{Non-Uniform Distributions}
The independent uniform distribution of the preferences is a widely common assumption in theoretical research, even though in real-life matching markets it is not always justified to assume such a distribution. The following distribution (presented in \cite{immorlica2005marriage}) preserves some of the properties that were used while adding a correlation between the preferences. Men still choose the preferences online as in the process before but instead of choosing uniformly between the women, they use an arbitrary distribution $\mathcal{D}^n$ over the set of women, thus making some of the women more popular than others while keeping variety in the men's preferences. It seems the result of our work may be extended to this distribution with minor adjustments assuming $\mathcal{D}^n$ doesn't make any woman extremely popular or unpopular.
The main constraint is that no woman is much more popular than the others; i.e. constraints on $\max_{x,y\in[n]}{\frac{\mathcal{D}^n(x)}{\mathcal{D}^n(y)}}$ or $\mathbb{E}_{x,y\in[n]}{\frac{\mathcal{D}^n(x)}{\mathcal{D}^n(y)}}$ will hopefully be sufficient for us to get similar results to those received in the uniform case.

\subsection{The Effect of Many Strategic Women}
As we have seen, the fact that one agent acts strategically affects in a nontrivial way the quality of the matching from the other women's perspective. This raises some interesting questions about a scenario in which there are several women act strategically.

\paragraph{Full Information Model.}
How many women need to act strategically in order to induce the women-optimal matching? The intuition in Section \ref{sec: perms} suggests that this number should be close to the number of cycles in a random permutation, i.e., an order of $\log(n)$.

\paragraph{Bayesian Information Model.}
Almost all of the women benefit from not reporting their preferences truthfully, given that the other women are truthful.
On the other hand, as soon as at least one woman truncates her list, many of the other women will have already promised a better match and will not gain much from not being truthful (their probability of staying unmatched is insensitive to the truncation of the first woman).
It is interesting to examine the dynamics of the strategies in such environments (with some suitable distributions of the preferences and cardinality over the matchings). 
The existence of equilibrium in truncation strategies under incomplete information is proved in \cite{coles2014optimal}.
Some considerable directions for further research:
\begin{itemize}
	\item In many real-life matchings, the players report their preferences to a centralized mechanism that runs the algorithm for them. Assume that the women report to the mechanism in an order that is known in advance. How would it affect women's reporting strategies?
	\item Assume that men can truncate their lists as well. What would be the equilibrium and how would a centralized mechanism can affect it?
\end{itemize}

	\bibliographystyle{plainnat}
	\bibliography{main}

\end{document}